\pgfplotsset{compat=1.10}
\definecolor{Gray}{gray}{0.90}
\newtheorem{theorem}{\bf{Theorem}}[section]
\newtheorem{lem}[theorem]{\bf{Lemma}}
\newenvironment{definition}[1][Definition]{\begin{trivlist}
\item[\hskip \labelsep {\bfseries #1}]}{\end{trivlist}}
\definecolor{amber}{rgb}{1.0, 0.75, 0.0}
\begin{document}

\title{\bf Priority Based Synchronization for Faster Learning in Games}
\author{ Abbasali Koochakzadeh and Yasin Yaz{\i}c{\i}o\u{g}lu
\thanks{Abbasali Koochakzadeh and Yasin~Yaz{\i}c{\i}o\u{g}lu are with the Department of Electrical and Computer Engineering at the University of Minnesota, Minneapolis, MN, USA. Emails:
Kooch002@umn.edu, ayasin@umn.edu}
}

\maketitle
%\IEEEtitleabstractindextext{%
%==================== Abstract =====================
\begin{abstract}
%WHAT IS THE MAIN PROBLEM/MOTIVATION. WHAT ARE THE MAIN CONTRIBUTIONS. NEED TO BE EXPLAINED WITHOUT ANY JARGON.
%Many distributed optimization or control 

Learning in games has been widely used to solve many cooperative multi-agent problems such as coverage control, consensus, self-reconfiguration or vehicle-target assignment. One standard approach in this domain is to formulate the problem as a potential game and to use an algorithm such as log-linear learning to achieve the stochastic stability of globally optimal configurations. Standard versions of such learning algorithms are asynchronous, i.e., only one agent updates its action at each round of the learning process. To enable faster learning, we propose a synchronization strategy based on decentralized random prioritization of agents, which allows multiple agents to change their actions simultaneously when they do not affect each other's utility or feasible actions. We show that the proposed approach can be integrated into any standard asynchronous learning algorithm to improve the convergence speed while maintaining the limiting behavior (e.g., stochastically stable configurations).  We support our theoretical results with simulations in a coverage control scenario.

\end{abstract}

% \begin{keywords}

% \end{keywords}
%========== Section: INTRODUCTION ===================
\section{Introduction}

Game theoretic formulations have been used to solve numerous multi-agent planning and control problems, both in cooperative and non-cooperative settings (e.g., \cite{marden2009cooperative,  yaziciouglu2021distributed,zhu2013distributed,wang2019game,bhattacharya2010game}). One such approach that has been extensively used for cooperative settings is to formulate the underlying coordination problem as a potential game so that every agent's utility function is aligned with the global objective function (e.g., \cite{marden2009cooperative}). The agents can then be driven to optimal configurations by iteratively revising their actions  in accordance with a suitable learning algorithm (e.g., see \cite{marden2009cooperative,Young93,jaleel2017transient,borowski2015fast,Blume93}, and the references therein). While these algorithms typically provide guarantees on the limiting behavior of the learning process, how the convergence time scales with the problem size (e.g., number of agents) depends on the type of game (e.g., \cite{jaleel2017transient,borowski2015fast, babichenko2016graphical,tatarenko2018learning}), and it may imply an impractically slow learning process in some cases.

In this paper, we propose a method that can be integrated into any standard asynchronous learning algorithm to facilitate faster convergence to the same limiting behavior by allowing multiple agents to update simultaneously as long as their next actions do not affect each other's utility. We particularly focus on the invariance of stochastically stable states under such a modification to the learning process since this is a common characterization of limiting behavior for potential games under stochastic, time-invariant learning algorithms (e.g., \cite{Blume93,Marden12,jaleel2017transient}). The stochastically stable states of an asynchronous algorithm can change when synchronous updates are allowed arbitrarily \cite{Marden12}. Synchronous algorithms where agents independently decide when to update their actions may lead to an even slower learning process to maintain stability (e.g., \cite{zhu2013distributed,Marden12,Yazicioglu17TCNS,wang2021stochastically}) or rely on strong assumptions such as the utility functions being always independent of the actions of other agents (e.g., \cite{hasanbeig2017synchronous}).  Alternatively, our proposed method is based on a decentralized random prioritization of agents to allow for synchronous updates only by uncoupled agents, i.e., agents that do not affect each other's utility or feasible actions at the current configuration. We consider a generic setting where the feasible actions of agents may be constrained by the current actions of themselves and other agents and the couplings among the agents may change during the learning process. We theoretically show the invariance of the stochastically stable states under the proposed synchronization method. We also numerically demonstrate how the proposed method improves the convergence speed of an asynchronous learning algorithm by considering a coverage control problem.

{\color{black}The organization of this paper is as follows: Section II includes some preliminaries on graph theory, stochastic stability, and game theory. Section III presents our proposed approach and main results. Section IV provides a coverage control problem as an example application and demonstrates the performance of the proposed approach via simulations. Finally, Section V concludes the paper. }

\section{Preliminaries}
\label{prelim}

%BACKGROUND, NOTATION, DEFINITIONS FOR EACH AREA. 

%A graph is connected if the distance between any pair of nodes is finite.

\subsection{Graphs}
A graph, $G = (V, E)$, consists of a set of
nodes, $V$, and a set of edges, $E$, given by ordered pairs of nodes. Accordingly, any $(v,v')\in E$ denotes an edge from $v$ to $v'$ and the nodes $v$ and $v'$ are adjacent. The graph is said to be undirected if $(v,v')\in E$ implies $(v',v)\in E$.

A \emph{path} is a sequence of nodes such that an edge exists between any two consecutive nodes in the sequence. For any two nodes $v$ and $v'$, the \emph{distance} between the nodes $d(v,v')$ is the number of edges in a shortest path from $v$ to $v'$. We use the convention that the distance of a node to itself is zero. A graph is (strongly) \emph{connected} if the distance between any pair of nodes is finite.
A \emph{self-loop} is an edge between a vertex and itself.

Given a graph $G = (V, E)$, a \emph{spanning tree} rooted at some node $v\in V$, $T_v=(V,E'\subseteq E)$, is a subgraph of G such that there is a unique path on $T_v$ from any state $v' \neq v$ to $v$. For any $G=(V,E)$ and $v,v' \in V$, we use $G'=G+(v,v')$ to denote the graph $G'=(V,E\cup\{(v,v')\})$ and $G'=G-(v,v')$ to denote the graph $G'=(V,E\setminus \{(v,v')\})$.

% The set of nodes containing a node $v$ and all the nodes adjacent to $v$ is called the \emph{(closed) neighborhood} of $v$, and it is denoted as $N_v$. Similarly, for any $\delta \geq 0$, the $\delta$-neighborhood of $v$, $N_v^\delta$, is the set of nodes that are at most $\delta$ away from $v$, i.e.
% \begin{equation}
% \label{N_v_definition}
% N_{v}^{\delta}=\left\{v^{\prime} \in V \mid d\left(v, v^{\prime}\right) \leq \delta\right\}.
% \end{equation}

% The set of nodes containing a node $v$ and all the nodes adjacent to $v$ is called the \emph{(closed) neighborhood} of $v$, and it is denoted as $N_v$. Similarly, for any $\delta \geq 0$, the $\delta$-neighborhood of $v$, $N^{\delta}_v$, is the set of nodes that are at most $\delta$ away from $v$, i.e.
% \begin{equation}
% \label{dneigh}
% \mathcal{N}^{\delta}_v = \{v' \in V \mid d(v,v') \leq \delta\}.
% \end{equation}

% Such a regular perturbed Markov chain has a unique limiting distribution, $\mu_\epsilon$, which satisfies
% \begin{equation}
% \label{sstab}
% \lim_ {\epsilon \rightarrow 0^+} \mu_{\epsilon}(a) >0 \Longleftrightarrow \phi(a) \geq \phi(a'), \forall a' \in A.
% \end{equation}

%The concept of stochastic stability will be extensively used in our derivations. Hence, we provide some related preliminaries below.

\subsection{Stochastic Stability}

%The concept of stochastic stability will be extensively used in the remainder of this paper. Hence, we provide some related preliminaries here.
% We provide the formal 

% Stochastic stability was introduced in game theory to explain how some equilibria may arise much more frequently than the others in the long run as the strategic decision makers update their actions in a stochastic setting with noise \cite{Blume93, Young93,Kandori93}. The concept of stochastic stability will be extensively used in the remainder of this paper. Hence, we provide some preliminaries prior to our derivations, and further details about the subject can be found in \cite{Blume93, Young93,Kandori93} and the references therein.  

%\textcolor{blue}{In this paper, we will present a learning algorithm that induces a regular perturbed Markov chain over the action space.  }
%As such, when the noise in such a process is very small, the states that are not stochastically stable are observed very rarely in the long run.

%It mainly hinges on the idea that such long-run dominant equilibria emerges as a consequence of the overall dynamics induced by the decision makers and aims to chaacterize how such selection of equilibria emerges. is not certain equilibria more 
%
% stochastic stability characterizes 
%
%
%When a noisy learning process induces a regular perturbed Markov chain over the action space, an action profile is stochastically stable if the corresponding entry of the limiting distribution does not approach zero as the noise in the process becomes arbitrarily small. 

\begin{definition} \emph{(Regular Perturbed Markov Chain)}: Let $P_0$ be the transition matrix of a Markov chain over a finite state space $A$. A perturbed Markov chain with the noise parameter $\epsilon$ is called a regular perturbed Markov chain if
\begin{enumerate}
\item $P_{\epsilon}$ is aperiodic and irreducible for $\epsilon > 0$,
\item $\lim_{\epsilon \rightarrow 0} P_{\epsilon}=P_0$,
\item For any $a,a' \in A$ if $P_{\epsilon}(a,a')>0$, then there exists $R(a,a') > 0$ such that
\begin{equation}
\label{resdef_org}
0 < \lim_ {\epsilon \rightarrow 0^+} \frac{P_{\epsilon}(a, a')}{\epsilon^{R(a,a')}} < \infty,
\end{equation}
where $R(a,a')$ is called the resistance of the transition from $a$ to $a'$.
\end{enumerate}
\end{definition}

For simplicity in notation, we will use $R(\cdot)$ to denote the total resistance of $\cdot$, which may encode a set/sequence of feasible transitions in $P_\epsilon$ that will be clear from the context. 

\begin{definition} \emph{(Stochastically Stable State)}: Let $P_{\epsilon}$ denote a regular perturbed Markov chain over a state space, $A$. A state, $a\in A$, is stochastically stable if 
\begin{equation}
\label{sstab}
\lim_ {\epsilon \rightarrow 0^+} \mu^*_{\epsilon}(a) >0,
\end{equation}
where $\mu^*_{\epsilon}$ denotes the limiting distribution of $P_{\epsilon}$.
\end{definition}

The stochastically stable states of a regular perturbed Markov chain, $P_\epsilon$, are the recurrent states of the unperturbed chain, $P_0$, with the minimum stochastic potential \cite{Young93}. The stochastically stable states of any regular perturbed Markov chain can be characterized through a resistance tree analysis. For any state $a \in A$ of a regular perturbed  Markov chain, a spanning tree rooted at $a$ can be constructed as a directed graph $T_a$, where the nodes correspond to the states, the edges correspond to the feasible state transitions, and there is a unique directed path on $T_a$ from any state $a' \neq a$ to $a$. The resistance of such a tree, $R(T_a)$, is defined as the sum of the resistances of its edges, where the resistance of each edge is given as in (\ref{resdef}). A spanning tree $T_a$ is called a minimum resistance tree if  any spanning tree rooted at $a$ has at least as much resistance as $T_a$. The stochastic potential of a state, $a$, is the total resistance of its minimum resistance tree.

\subsection{Games}
A finite strategic game $\Gamma = (I, A, U)$ has three components: (1) a set of agents $I = \{1, 2,\hdots, n\}$, (2)
an action space $A = A_1 \times A_2 \times . . . \times A_n$, where each $A_i$ is the action set of agent $i$, and (3) a set of utility functions $U = \{U_1, U_2,\hdots, U_n\}$, where each $U_i : A \rightarrow \mathbb{R}$ is a mapping from the action space to real numbers. For any action profile $a \in A$, we use $a_{-i}$ to denote the
actions of agents other than $i$. Using this notation, an action profile $a$ can also be represented as $a = (a_i
, a_{-i})$. 

% An action profile $a^* \in A$ is called a \emph{Nash equilibrium} if
% \begin{equation}
% \label{gametheory1}
% U_{i}\left(a_{i}^{*}, a_{-i}^{*}\right)=\max _{a_{i} \in A_{i}} U_{i}\left(a_{i}, a_{-i}^{*}\right), \forall i \in I.
% \end{equation}
A class of games that is widely utilized in cooperative
control problems is the potential games, where the utilities of all agents are aligned with some global function over the action space (e.g., \cite{marden2009cooperative}). Constrained potential games (e.g., \cite{zhu2013distributed}) are a generalization, where the feasible actions of agents can be constrained by the current action profile.

\begin{definition} \emph{(Constrained Potential Game)}: A constrained game, $\Gamma=(I,A,U,C)$, has four components: 
\begin{enumerate}
\item A set of agents, $I = \{1, 2,\hdots, n\}$, 
\item An action space, $A = A_1 \times A_2 \times . . . \times A_n$, where each $A_i$ is the action set of agent $i$,
\item A set of utility functions $U = \{U_1, U_2,\hdots, U_n\}$, where each $U_i : A \rightarrow \mathbb{R}$ is a mapping from the action space to real numbers,
\item A set of constraint functions $C = \{C_1, C_2,\hdots, C_n\}$, where $C_i : A \rightarrow 2^{A_i}$ maps each action profile to the corresponding set of feasible actions of agent $i$, and $a_i \in C_i(a)$ for every $a\in A$. 
\end{enumerate}
Furthermore, the game is a constrained potential game if there exists a function, $\phi : A \rightarrow \mathbb{R}$, such that for every agent $i$, action profile $a \in A$, and feasible action $a_i^\prime \in C_i(a)$,
\begin{equation}
\label{cpotg}
U_{i}\left(a_{i}^{\prime}, a_{-i}\right)-U_{i}\left(a_{i}, a_{-i}\right)=\phi\left(a_{i}^{\prime}, a_{-i}\right)-\phi\left(a_{i}, a_{-i}\right).
\end{equation}
\end{definition}
Accordingly, unconstrained potential games are a special case, where $C_i(a)=A_i$ for every $i \in I$ and $a\in A$.

In game theoretic learning, the agents update their actions based on past observations in a repetitive play of the game. At each round, each agent $i \in  I$ plays an action $a_i$ and receives the utility $U_i(a)$. In a constrained setting, an updating agent chooses its next action from its feasible actions, $C_i(a)$, based on the utilities it may receive from such unilateral deviations. An unconstrained learning process is the special where $C_i(a)=A_i$ for all $a\in A$. We particularly focus on noisy best response type policies such as log-linear learning, Metropolis learning, and their variants. We denote the update policy followed by the agents as $\pi_\epsilon(C_i(a); U_i)$, which is a probability distribution over $C_i(a)$ that depends on  $U_i(a_i',a_{-i})$ for every $a_i' \in C_i(a)$. Here, $\epsilon \geq 0$ denotes the noise parameter. Accordingly, $\pi_0(C_i(a); U_i)$ is the best response policy that chooses an action $a'\in C_i(a)$ that maximizes $U_i(a_i',a_{-i})$ whereas  $\pi_\epsilon(C_i(a); U_i)$ denotes a noisy/perturbed version for any $\epsilon>0$. Such a standard asynchronous algorithm can be given as in Alg. \ref{alg:async}. 

\begin{algorithm}
{\small
\caption{A generic asynchronous learning algorithm}\label{alg:async}
\hspace*{0.5mm} \textbf{Input}: $\Gamma=(I,A,U,C)$, update policy $\pi_\epsilon$, $\epsilon \geq 0$ \\
\hspace*{0.5mm} \textbf{Initialization}: arbitrary  $a\in A$
\begin{algorithmic}[1]
\While{(1)}
\State Pick an agent $i$ uniformly at random.
\State Agent $i$ updates its action: 
 $$a_{i}^+\sim \pi_\epsilon(C_i(a); U_i).$$
\State Other agents maintain their actions: $a_{-i}^+=a_{-i}$.
\State $a=a^+$.
\EndWhile
\end{algorithmic}}
\end{algorithm}

\section{Main Results}
\label{T_Proofs}
In this section, we present our proposed approach and main theoretical results.

\subsection{Proposed Synchronization of Learning Algorithms}

Our proposed approach is mainly based on a distributed strategy that achieves a random prioritization of the agents and allows the simultaneous updates of agents whose local (constrained) action-updates do not influence each other's utility or feasible actions. 
\begin{definition} \emph{(Uncoupled Agents)}: 
\label{def:I_uii}
Consider any action profile $a \in A$, agent $i\in I$, and set of agents $J\subseteq I\setminus\{i\}$. Let $a$ be expressed as $a=(a_i,a_J,a_{-iJ})$, where $a_i$ is the action of agent $i$, $a_J$ denotes the actions of agents in $J$, and $a_{-iJ}$ denotes the actions of all the other agents. We say that agent $i$ is uncoupled from the agents in $J$ at action profile $a$ if
\begin{equation}
\label{U_i_eq1}
U_i(a_i',a_J',a_{-iJ})=U_i(a_i',a_J,a_{-iJ}), \; \forall \; {a_i'\in C_i(a), a_J'\in C_J(a)},\\
C_i(a_i',a_J',a_{-iJ})=C_i(a_i',a_J,a_{-iJ}), \; \forall \; {a_i'\in C_i(a), a_J'\in C_J(a)},
\end{equation}
where $C_J(a)=\prod_{j\in J}C_j(a)$ is the feasible joint actions for the agents in $J$. 
\end{definition}
Based on the definition above, it can be shown that if $i$ is uncoupled from $J$ at some $a\in A$, then $i$ is uncoupled from any $J' \subseteq J$ at $a$.
\begin{definition} \emph{(Coupling Functions)}: Consider any $\{I_1^c, I_2^c, \hdots, I_n^c\}$ such that each $I_i^c: A \mapsto 2^I$ is a mapping from the action space $A$ to the power set of the set of agents, $I$. We say that $\{I_1^c, I_2^c, \hdots, I_n^c\}$ is a set of valid coupling functions if the following conditions are all true: 
\begin{enumerate}
\item $i \in I_i^c(a)$, $ \forall i\in I,a\in A$. 
\item $i \in I_j^c(a) \Leftrightarrow j \in I_i^c(a)$, $\forall i,j \in I, a\in A$.
\item $i$ is uncoupled from $I \setminus I^c_i(a)$ at $a$, $\forall i\in I,a\in A$.
\end{enumerate}
\end{definition}

Our proposed synchronization of any standard asynchronous learning algorithm as in Alg. \ref{alg:async} is provided in Alg. \ref{alg:sync}. Here, each agent $i\in I$ independently chooses an intended action $\bar{a}_i$ as per $\pi_\epsilon$ in Alg. \ref{alg:async}. It then determines its priority variable, $\beta_i$, which is set to zero if agent $i$ plans to repeat its action as per the previous step ($\bar{a}_i = a_i$) or due to its inertia (with probability $\kappa$).  Otherwise, $\beta_i$ is picked uniformly at random from the interval $[0,1]$, i.e., $\beta_i \sim \mathcal{U}(0,1)$. If agent $i$ plans to change its action ($\bar{a}_i \neq a_i$, $\kappa_i>\kappa$), it is allowed to do so as long as it has the unique highest priority in $I^c_i(a)$. Agent $i$ can check if this condition holds by gathering the priority variables of all other agents in $I^c_i(a)$ via communications. As such, each agent determines its desired next action independently, and coupled agents communicate with each other to determine who should be allowed to move (change its action) based on their priority variables. This process ensures that any moving agent is uncoupled from the others moving simultaneously. 

One important parameter in the execution of Alg. \ref{alg:sync} is the selection of coupling functions $\{I_1^c, I_2^c, \hdots, I_n^c\}$, which in general is not unique. Since any moving agent $i$ restricts the others in $I_i^c(a)$ to remain stationary, it would be desired to find coupling functions that are minimally restrictive, i.e., each $I_i^c(a)$ is as small as possible.  As we will also show with an example in Section \ref{example}, such minimally restrictive maps can be determined in many scenarios based on the problem specifications. For example, for a team of mobile robots where each agent's utility and feasible actions depend only on the other agents nearby, each $I_i^c(a)$ may be defined as the agents that are sufficiently close to $i$. When no such prior information is available, one valid, yet conservative, selection is to set $I_i^c(a)=I$ for every $i\in A$ and $a\in A$, in which case Alg. \ref{alg:sync} would not allow multiple agents to move simultaneously.

\begin{algorithm}{\small
\caption{Proposed synchronization of a learning alg.}\label{alg:sync}
\hspace*{0.5mm} \textbf{Input}: {$\Gamma=(I,A,U,C)$, policy $\pi_\epsilon$, $\epsilon \geq 0$, inertia $\kappa \in (0,1)$} \\
\hspace*{0.5mm} \textbf{Initialization}: arbitrary $a\in A$
\begin{algorithmic}[1]
\While{(1)}
\State Each agent $i$ executes the following:
\State Generate a random $\kappa_i \sim \mathcal{U}(0,1)$.
% \State Each agent $i$ chooses an intended action $\bar{a}_i$: $$\bar{a}_i=\left\{\begin{array}{ll} \mbox{$a_i$, if $\kappa_i \leq \kappa$,} \\ \mbox{$\sim \pi(C_i(a); U_i)$, otherwise.}\end{array}\right.$$
\State Choose an intended action $\bar{a}_i$: $$\bar{a}_i\sim \pi_{\epsilon}(C_i(a); U_i).$$
\State Generate the priority variable $\beta_i$:
$$\beta_i=\left\{\begin{array}{ll} \mbox{$0$, if $\bar{a}_i =a_i$ or $\kappa_i \leq \kappa$,} \\ \mbox{$\sim \mathcal{U}(0,1)$, otherwise.}\end{array}\right.$$
\State Update the action: 
$$a_i^+=\left\{\begin{array}{ll} \mbox{$a_i$, if $\beta_i=0$ or $\exists j \neq i \in I_i^{c}(a): \beta_j \geq \beta_i$,} \\ \mbox{$\bar{a}_i$, otherwise.}\end{array}\right.$$
% $$a_i^+=\left\{\begin{array}{ll} \mbox{$\bar{a}_i$, if $I_i^{c}(a)=\emptyset$ and $\kappa_i \geq \kappa$,} \\ \mbox{$\bar{a}_i$, else if $\beta_i>\beta_j, \; \forall j \in I_i^{c}(a)$,} \\ \mbox{$a_i$, otherwise.}\end{array}\right.$$
\State $a_i=a^+_i$.
\EndWhile
\end{algorithmic}}
\end{algorithm}

% \begin{algorithm}
% \caption{Proposed synchronization of a learning alg.}\label{alg:sync}
% \hspace*{0.5mm} \textbf{Input}: $\Gamma=(I,A,U,C)$, update policy $\pi$, inertia $\kappa \in (0,1)$ \\
% \hspace*{0.5mm} \textbf{Initialization}: arbitrary  $a\in A$
% \begin{algorithmic}[1]
% \While{(1)}
% \State Each agent $i$ generates a random $\kappa_i \sim \mathcal{U}(0,1)$.
% \State Each agent $i$ chooses an intended action $a_i'$: $$a_i'\sim \pi(C_i(a); U_i)$$
% \State Each agent $i$ generates its priority variable $\beta_i$:
% $$\beta_i \sim \mathcal{U}(0,1)$$
% \If{$a_i\neq a'_i$}
% $$a'_i=\left\{\begin{array}{ll} \mbox{$a_i'$, if $I_i^{d}(a)=\emptyset$ or $\beta_i>\beta_j, \; \forall j \in I_i^{d}(a)$,} \\ \mbox{$a_i$, otherwise.}\end{array}\right.$$
% % \Else 
% % $$a'_i = a_i$$
% \EndIf
% \State Each agent $i$ updates.
% $$a_i^+=\left\{\begin{array}{ll} \mbox{$a_i'$, if $\kappa_i>\kappa$,} \\ \mbox{$a_i$, otherwise.}\end{array}\right.$$
% \State $a=a^+$.
% \EndWhile
% \end{algorithmic}
% \end{algorithm}

%Note that Alg. \ref{alg:sync} as the synchronization structure, does not require $\epsilon>0$, meaning that Alg. \ref{alg:sync} is also applicable on unperturbed cases such as the Best Response and etc. where $\epsilon = 0$.

%  In this analysis, we will use $P_\epsilon$ and $P'_\epsilon$ to denote the regular perturbed Markov chains induced by Alg. \ref{alg:async} and Alg. \ref{alg:sync}, respectively.

\subsection{Relating the Transition Probabilities under Algs. \ref{alg:async} and \ref{alg:sync}}

%Prior to presenting our main results on the impact of the proposed synchronization on the stochastically stable states, we first relate the transition probabilities under Alg. \ref{alg:sync}, $P'_\epsilon$, to the transition probabilities under Alg. \ref{alg:async}, $P_\epsilon$. At any iteration of Algorithm \ref{alg:sync}, let $S \subseteq I$ be the set of agents who are allowed to pick their next actions as per $\pi(C_i(a); U_i)$, 
In the remainder of this section, we will focus on how the proposed synchronization affects the limiting behavior for asynchronous algorithms that induce a regular perturbed Markov chain over the action space. Prior to presenting our main results on the impact of the proposed synchronization on the stochastically stable states, we first relate the transition probabilities under Alg. \ref{alg:sync}, $P'_\epsilon$, to the transition probabilities under Alg. \ref{alg:async}, $P_\epsilon$. For any $a,a'\in A$, let $K_{a,a'}\subseteq I$ be the set of agents whose actions are different in $a$ and $a'$, i.e.,
% \begin{equation}
% \label{K}
% K(a,a')=\{i\in I\mid a_i\neq a'_i\}.
% \end{equation}
\begin{equation}
\label{K}
K_{a,a'}=\{i\in I\mid a_i\neq a_i'\}.
\end{equation}
%For any $\forall \bar{a}_i \in \bar{A}_{a,a'}$ we have
% \begin{equation}
%     \label{a_bar_i}
%     \bar{a}_i = \left\{\begin{array}{ll} \mbox{$a'_i$, if $i \in K_{a,a'}$,} \\ \mbox{$a_i$, otherwise.}\end{array}\right.
% \end{equation}
At any iteration of Alg. \ref{alg:sync}, where the current action profile is $a$, let $S_a \subseteq I$ be the set of agents who are allowed to simultaneously change their actions, i.e.
\begin{equation}
\label{S-set}
S_a= \{i \in I \mid a^+_i = \bar{a}_i \neq a_i\}.
%\; \text{or} \; I_i^{D}(a)=\emptyset\}.
\end{equation}
%For transition from $a$ to $a'$ let $S=\{i\in I | a_i\neq a'_i\}$ be the set of total agents who are permitted to update simultaneously, 
%in which
%\# explanation about uncoupled
% \begin{equation}
% \label{C_def}
% \forall{i,j \in S}:\quad i\in I^{u}_j(a) \quad \& \quad j\in I^{u}_i(a).
% \end{equation}
%In addition,
% \begin{equation}
% \label{C_def2}
% \forall{i \in S}: \beta_i> \beta_j,\quad \forall{j}\in I^{c}_i(a)\quad \text{or}\quad i\in I^{c}_j(a).
% \end{equation}
%Note that any agent $i \in S$ chooses its next action from $C_i(a)$ based on probabilities exactly same as the asynchronous version, i.e., according to the distribution $\pi(C_i(a(t)); U_i)$. 

Note that $S_a$ is determined by the couplings encoded in $I_1^c(a), \hdots,I_n^c(a)$ and the random variables $\bar{a}$ , $\kappa_1, \hdots, \kappa_n$, and $\beta_1, \hdots, \beta_n$. We use $Pr(S_a=K_{a,a'}\mid \bar{a})$ to denote the probability of observing $S_a$ in \eqref{S-set} being equal to $K_{a,a'}$ in \eqref{K}, given the intended actions of agents, $\bar{a}$,  in line 4 of Alg. \ref{alg:sync}. 
Let $\bar{A}_{a,a'}$ be the set of $\bar{a}$ that allows for a transition from $a$ to $a'$, i.e.,
\begin{equation}
    \label{A_bar}
    \bar{A}_{a,a'}= \{\bar{a} \in A \mid \bar{a}_i = a_i', \; \forall{i \in K_{a,a'}} \}.
\end{equation}
% \begin{equation}
%     \label{A_bar}
%     \bar{A}_{a,a'}= \{\bar{a}\in \bar{A}_{a,a'} \mid \bar{a}_i = a'_i, \; \forall{i \in K_{a,a'}}, \bar{a}_i =a_i, \forall{i \notin K_{a,a'}}\}.
% \end{equation}
 Then, the probability of switching from $a$ to $a'$ under Alg. \ref{alg:sync} can be expressed as
\begin{equation}
\label{main_equation1}
P'_\epsilon(a,a')=\sum_{\bar{a}\in\bar{A}_{a,a'}} Pr(S_a=K_{a,a'}\mid \bar{a}) Pr(\bar{a}),
\end{equation}
where $Pr(\bar{a})$ is the probability that $\pi_\epsilon$ produces $\bar{a}_i$ for every agent $i\in I$ in line 4 of the Alg. \ref{alg:sync} when the current action profile is $a$. Given $\bar{a}$, the following lines in Alg. \ref{alg:sync} leads to $a^+$ based on the values of priority variables and the couplings among the agents. Action profiles $\bar{a}$ and $a^+$ together determine the set $S_a$. Accordingly, $Pr(S_a=K_{a,a'}\mid \bar{a})$ denotes the probability of observing $S_a=K_{a,a'}$ given $\bar{a}$.  Since $Pr(\bar{a})$ equals the product of probabilities that each agent chooses $\bar{a}_i$ by following $\pi_\epsilon$, \eqref{main_equation1} can be expressed in terms of the transition probabilities under Alg. \ref{alg:async} as

\begin{equation}
\label{main_equationn1}
P'_\epsilon(a,a')=\sum_{\bar{a}\in\bar{A}_{a,a'}} Pr(S_a=K_{a,a'}\mid \bar{a}) \prod_{j\in I} nP_\epsilon(a,\alpha^{j}),
\end{equation}
% \begin{equation}
% \label{main_equation1}
% P'_\epsilon(a,a')=\sum_{S\in{2^I}|K(a,a')\subseteq S} P_r(S;a) \prod_{j\in S} nP_\epsilon(a,\alpha^{j}),
% \end{equation}
where each $\alpha^j$ is the state whose entries are
\begin{equation}
\label{alpha_j}
\alpha^{j}_i = \left\{\begin{array}{ll} \mbox{$a_i$, if $i \neq j$,} \\ \mbox{$\bar{a}_i$, if $i=j$.}\end{array}\right.
\end{equation}

In \eqref{main_equationn1}, each term $nP_\epsilon(a,\alpha^{j})$ denotes the probability of agent $j$ choosing $\bar{a}_j$ by following $\pi_\epsilon$. Here, multiplying by $n$ inverts the multiplier $1/n$ in  $P_\epsilon(a,\alpha^{j})$ due to the probability of randomly picking $j$ in line 2 of Alg. \ref{alg:async}.
%{\color{blue}EXPLAIN WHY THIS EQUATION IS TRUE. What is Pr($\bar{a}$)}

%Note that in each time step $Pr(S_a=K_{a,a'}\mid \bar{a})=0$ if $\bar{a}\in \bar{A}_{a,a'}$ is different from $a'$ and $a$ in the $i$th entry where $i\notin K_{a,a'}$.

\subsection{Impact on the  Stochastically Stable States}
\label{section_IV.B.}

% we are multiplying $P_r(S;a)$ with the inverse of probability of uniformly picking an agent at random in the asynchronous version. Based on Algorithm \ref{alg:sync} (line 5), the probability of realizing a certain $S$ has some dependence of $a$ as well, which determines the coupled sets for each agent in $S$.

Our main result in this section, Theorem \ref{thm-stoch}, shows that the proposed synchronization in Alg. \ref{alg:sync} does not change the stochastically stable states induced by an asynchronous algorithm in Alg. \ref{alg:async}. To this end, we first present some lemmas.  Our first result, Lemma \ref{resistance_for_probs_sum_multiplication}, will later be used when expressing the resistances of transition probabilities given as products and sums of other probabilities as in \eqref{main_equation1}.

\begin{lem}
\label{resistance_for_probs_sum_multiplication}
Let $p_{1}(\epsilon), p_{2}(\epsilon), \hdots, p_{n}(\epsilon)$ be functions of $\epsilon$ such that for each $p_i(\epsilon)$, there exists some $r_i\geq0$ satisfying
\begin{equation}
\label{resdef}
0 < \lim_ {\epsilon \rightarrow 0^+} \frac{p_i(\epsilon)}{\epsilon^{r_i}} < \infty.
\end{equation}
Then, the following equations are satisfied:
\begin{equation}
\label{resprod}
0 < \lim_{\epsilon \rightarrow 0^+} \frac{\prod_{i=1}^{n} p_{i}(\epsilon)}{\epsilon^{r_p}} < \infty,
\end{equation}
\begin{equation}
\label{ressum}
0 < \lim_{\epsilon \rightarrow 0^+} \frac{\sum_{i=1}^{n} p_{i}(\epsilon)}{\epsilon^{r_s}} < \infty,
\end{equation}
 where $r_p = \sum_{i=1}^{n} r_i$ and $r_s = \min(r_1, r_2, \hdots, r_n)$.
\end{lem}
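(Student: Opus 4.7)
The plan is to prove the two claims separately, both by direct calculation using the hypothesis that each $p_i(\epsilon)/\epsilon^{r_i}$ has a finite positive limit as $\epsilon \to 0^+$. Denote these limits by $c_i := \lim_{\epsilon \to 0^+} p_i(\epsilon)/\epsilon^{r_i}$, so that $0 < c_i < \infty$ for every $i$.

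For the product statement \eqref{resprod}, I would simply rewrite
\begin{equation}
\frac{\prod_{i=1}^n p_i(\epsilon)}{\epsilon^{r_p}} = \prod_{i=1}^n \frac{p_i(\epsilon)}{\epsilon^{r_i}},
\end{equation}
using $r_p = \sum_i r_i$, and then invoke the product rule for limits (valid because each factor has a finite limit and there are finitely many of them) to conclude that the limit equals $\prod_{i=1}^n c_i$, which lies in $(0,\infty)$.

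For the sum statement \eqref{ressum}, let $r_s = \min_i r_i$ and split the index set as $I_{\min} = \{i : r_i = r_s\}$ and $I_{>} = \{i : r_i > r_s\}$. Write
\begin{equation}
\frac{\sum_{i=1}^n p_i(\epsilon)}{\epsilon^{r_s}} = \sum_{i \in I_{\min}} \frac{p_i(\epsilon)}{\epsilon^{r_i}} + \sum_{i \in I_{>}} \frac{p_i(\epsilon)}{\epsilon^{r_i}} \cdot \epsilon^{r_i - r_s}.
\end{equation}
The first sum tends to $\sum_{i \in I_{\min}} c_i$, which is strictly positive since $I_{\min}$ is nonempty and each $c_i > 0$. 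In the second sum, each factor $p_i(\epsilon)/\epsilon^{r_i}$ tends to the finite limit $c_i$ while $\epsilon^{r_i - r_s} \to 0$ since $r_i - r_s > 0$; hence the second sum vanishes in the limit. Adding the two gives a finite positive limit, establishing \eqref{ressum}.

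There is no real obstacle here — both claims reduce to elementary limit arithmetic once the key observation is made that the minimum exponent dominates sums while exponents add in products. The only subtlety worth being careful about is ensuring $I_{\min}$ is nonempty (which is immediate since the minimum of a finite set is attained) and that all $c_i$ are strictly positive (which is exactly the lower bound in hypothesis \eqref{resdef}); these together rule out cancellation or degeneracy in the limiting value.
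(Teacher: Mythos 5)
Your proposal is correct and follows essentially the same route as the paper's proof: the product claim is handled by factoring the ratio and applying the product rule for limits, and the sum claim by observing term by term that $p_i(\epsilon)/\epsilon^{r_s}$ tends to a positive constant when $r_i = r_s$ and to zero when $r_i > r_s$, with at least one index attaining the minimum. The only cosmetic difference is that you group the indices into the two sets $I_{\min}$ and $I_{>}$ up front, whereas the paper argues per term; the substance is identical.
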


%For first statement we use a contradiction method, we have
%\begin{equation}
%\label{L1}
%L = \lim_ {\epsilon \rightarrow 0^+} \frac{P_{\epsilon}}{\epsilon^{R}}\;, \quad 0 < L < \infty.
%\end{equation}
%Now assume there exist a $\delta \in \mathbb{R}$ such that $R[\kappa P_\epsilon]=R+\delta$ also satisfies \eqref{resdef}. From \eqref{resdef} and \eqref{L1} we obtain
%\begin{equation}
%\label{L2}
%0 < L(\lim_ {\epsilon \rightarrow 0^+} \frac{\kappa}{\epsilon^{\delta}}) < \infty,
%\end{equation}
%where based on $\delta$, for the phrase in the middle of \eqref{L2} we can find
%\begin{equation}
%\label{L3}
%L(\lim_ {\epsilon \rightarrow 0^+} \frac{\kappa}{\epsilon^{\delta}})= \Big\{^{0, \quad \delta \in \mathbb{R^-}}_{\infty, \quad \delta \in \mathbb{R^+}},
%\end{equation}
%which contradicts with our assumption before. So \eqref{L2} only holds when $\delta=0$, which means $R[\kappa P_\epsilon]=R$.
\begin{proof}
Since every limit in \eqref{resdef} is positive and finite, the product of those limits are also positive and finite, i.e., 
\begin{equation}
\label{resprod2}
0 <  \prod_{i=1}^{n}\lim_{\epsilon \rightarrow 0^+} \frac{p_{i}(\epsilon)}{\epsilon^{r_i}} < \infty.
\end{equation}
Since all the limits in \eqref{resprod2} are taken as $\epsilon$ goes down to zero, the product of limits in \eqref{resprod2} equals the limit of products, i.e.,
\begin{equation}
\label{resprod3}
\prod_{i=1}^{n}\lim_{\epsilon \rightarrow 0^+} \frac{p_{i}(\epsilon)}{\epsilon^{r_i}} = \lim_{\epsilon \rightarrow 0^+} \frac{\prod_{i=1}^{n}p_{i}(\epsilon)}{\prod_{i=1}^{n}\epsilon^{r_i}}= \lim_{\epsilon \rightarrow 0^+} \frac{\prod_{i=1}^{n}p_{i}(\epsilon)}{\epsilon^{r_p}},
\end{equation}
where $r_p = \sum_{i=1}^{n} r_i$. Using \eqref{resprod2} and \eqref{resprod3}, we obtain \eqref{resprod}.

Next, we prove that \eqref{ressum} holds. For each $r_i$  and $r_s = \min(r_1, r_2, \hdots, r_n)$, we have
\begin{equation}
\label{R_i_resdef}
\lim_ {\epsilon \rightarrow 0^+} \frac{\epsilon^{r_i}}{\epsilon^{r_s}}= \left\{\begin{array}{ll} \mbox{$0$, if $r_i>r_s$,} \\ \mbox{$1$, if $r_i=r_s$.}\end{array}\right.
\end{equation}
Using\eqref{resdef} and \eqref{R_i_resdef}, we obtain
\begin{equation}
\label{R_i_ej_resdef}
\lim_ {\epsilon \rightarrow 0^+} \frac{p_{i}(\epsilon)}{\epsilon^{r_s}}=\lim_ {\epsilon \rightarrow 0^+} \frac{p_{i}(\epsilon)}{\epsilon^{r_i}}\frac{\epsilon^{r_i}}{\epsilon^{r_s}}= \left\{\begin{array}{ll} \mbox{$0$, if $r_i>r_s$,} \\ \mbox{$\lim \limits_{\epsilon \rightarrow 0^+} \frac{p_{i}(\epsilon)}{\epsilon^{r_i}}$, if $r_i=r_s$.}\end{array}\right.
\end{equation}
Since $r_i=r_s$ for at least one $i\in \{1, \hdots,n \}$, \eqref{resdef} and \eqref{R_i_ej_resdef} together imply \eqref{ressum}.

\end{proof}

We next show that any feasible transition $(a,a'\neq a)$ of the asynchronous learning ($P_\epsilon$) is also feasible under the proposed synchronous version ($P_\epsilon'$) and has equal resistance in these two regular perturbed Markov chains.

\begin{lem}
\label{first_condition}
Let $P_\epsilon$ be a regular perturbed Markov chain induced by an asynchronous learning algorithm as in Alg. \ref{alg:async} and let $P_\epsilon'$ be the chain induced by its proposed synchronous version as in Alg. \ref{alg:sync}. Any feasible transition $(a,a'\neq a)$ in $P_\epsilon$ is also feasible in $P_\epsilon'$ and has equal resistances in those two Markov chains.
\end{lem}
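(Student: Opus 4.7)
The plan is to prove feasibility and resistance-equality in turn, the latter reducing to a minimization on top of \eqref{main_equationn1} and Lemma~\ref{resistance_for_probs_sum_multiplication}. Since Alg.~\ref{alg:async} updates at most one agent per round, any feasible transition $(a,a')$ with $a' \neq a$ under $P_\epsilon$ must differ in exactly one coordinate, so $|K_{a,a'}|=1$; call the unique differing agent $j$. Then $a'_j \in C_j(a) \setminus \{a_j\}$ and $\pi_\epsilon(a'_j; C_j(a), U_j) > 0$ for every $\epsilon > 0$.

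To verify feasibility under Alg.~\ref{alg:sync}, I would exhibit a single positive-probability realization yielding $a \to a'$: each agent $i \neq j$ draws $\bar{a}_i = a_i$ (positive probability under a noisy best-response policy), which sets $\beta_i = 0$ in line~5; agent $j$ draws $\bar{a}_j = a'_j$ and satisfies $\kappa_j > \kappa$, which jointly has positive probability and produces $\beta_j > 0$ almost surely. Then $j$ has the strictly largest priority in $I_j^c(a)$, line~6 outputs $a_j^+=a'_j$ and $a_i^+=a_i$ for $i \neq j$, so $a^+ = a'$ and $P'_\epsilon(a,a') > 0$.

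For the resistance, I would apply Lemma~\ref{resistance_for_probs_sum_multiplication} directly to \eqref{main_equationn1}. The scheduling weights $Pr(S_a = K_{a,a'} \mid \bar{a})$ depend only on $\kappa$ and on the draws of $\kappa_i, \beta_i$, so they are $\epsilon$-independent: they contribute resistance $0$ where positive and eliminate their term otherwise. Each factor $n P_\epsilon(a,\alpha^i) = \pi_\epsilon(\bar{a}_i; C_i(a), U_i)$ has a well-defined nonnegative resistance, say $r_i(\bar{a}_i)$. Hence the resistance of the product is $\sum_{i \in I} r_i(\bar{a}_i)$ and, by the sum part of the lemma, the resistance $R'(a,a')$ of $P'_\epsilon(a,a')$ equals the minimum of $\sum_{i \in I} r_i(\bar{a}_i)$ over those $\bar{a} \in \bar{A}_{a,a'}$ with $Pr(S_a = \{j\} \mid \bar{a}) > 0$.

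It remains to minimize. The constraint $\bar{a} \in \bar{A}_{a,a'}$ only fixes $\bar{a}_j = a'_j$, contributing the constant $r_j(a'_j)$, which by definition equals the async resistance $R(a,a')$. For each $i \neq j$, the noisy best-response form of $\pi_\epsilon$ admits some $\bar{a}_i \in C_i(a)$ (a best response to $a_{-i}$) with $r_i(\bar{a}_i)=0$. The delicate step, and the main obstacle I anticipate, is checking that this unconstrained minimizer remains compatible with the scheduling condition $Pr(S_a = \{j\} \mid \bar{a}) > 0$. That compatibility is inherited from the feasibility argument: for any $\bar{a}$ with $\bar{a}_j \neq a_j$, the event that $\kappa_i \leq \kappa$ for all $i \neq j$ and $\kappa_j > \kappa$ already isolates $j$ as the sole mover irrespective of the other $\bar{a}_i$. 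Combining yields $R'(a,a') = R(a,a')$, as claimed.
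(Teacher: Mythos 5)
Your proposal is correct and follows essentially the same route as the paper: feasibility is shown by exhibiting a positive-probability realization in which every agent except the single deviator is silenced (inertia or staying intentions force $\beta_i=0$), and resistance equality comes from the decomposition \eqref{main_equationn1} combined with Lemma~\ref{resistance_for_probs_sum_multiplication} and the $\epsilon$-independence of the scheduling probabilities, lower-bounded via the inertia event $(1-\kappa)\kappa^{n-1}$. The only difference is a minor (and slightly more careful) refinement: where the paper isolates the summand $\bar{a}^k$ in which all other agents intend to stay (see \eqref{main_equation3}), you minimize over all admissible summands and witness the minimum with zero-resistance intended actions for the non-movers, which explicitly accounts for the intention-probability factors of the stationary agents.
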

\begin{proof}
Consider any feasible transition of $P_\epsilon (a,a')>0$ from some state $a$ to some other state $a'\neq a$. Since $P_\epsilon$ is induced by an asynchronous learning algorithm, $a$ and $a'$ must differ only in one agent's action, say the $k^{th}$ dimension, i.e., $K_{a,a'}=\{k\}$ in \eqref{K}. For the same transition in the synchronous version, we can express $P'_\epsilon(a,a')$ as given in \eqref{main_equationn1}. Let $\bar{a}^j$ be an intended action profile whose $j^{th}$ entry is equal $a'$ and other entries are from $a$, i.e.
\begin{equation}
\label{bar_a^j}
\bar{a}^{j}_i = \left\{\begin{array}{ll} \mbox{$a_i$, if $i \neq j$,} \\ \mbox{$a_i'$, if $i=j$.}\end{array}\right.
\end{equation}
Note that since $a$ and $a'$ only different in $k^{th}$ entry then based on \eqref{bar_a^j}, 
%and since the only different entry from $a$ in both $\bar{a}^j$ and $a'$ is the $j$th entry, we have 
$\bar{a}^k_k = a'_k$. Now based on \eqref{alpha_j}, if we compute $\alpha^k$ given an $\bar{a}=\bar{a}^k$ we have $\alpha^k_k=a'_k$ and other entries of $\alpha^k$ are same as $a$. Hence, $\alpha^k = a'$. Furthermore based on \eqref{A_bar}, we have $\alpha^k \in \bar{A}_{a,a'}$. If we separate $\bar{a}^k$ from the rest of the summation in \eqref{main_equationn1}, we obtain $P'_\epsilon(a,a')$ as
%Note that based on \eqref{bar_a^j} and \eqref{alpha_j} we have $\alpha^j=a'_j$, if we consider the case $j = {k}$ we have $\bar{a}^k\in \bar{A}_{a,a'}$ and if we seprate it from the rest we in the summation in \eqref{main_equationn1}, we obtain $P'_\epsilon(a,a')$ as
% \begin{equation}
% \label{main_equation3}
% P_r(S_a=K_{a,a'}\mid \bar{a}^{k})nP_\epsilon(a,\alpha^{k})+\sum_{\bar{a}\in\bar{A}_{a,a'}\setminus \{\bar{a}^k\}} P_r(S_a=K_{a,a'}\mid\bar{a})\prod_{j\in I}nP_\epsilon(a,\alpha^j),
% \end{equation}
\begin{equation}
\label{main_equation3}
P_r(S_a=\{k\}\mid \bar{a}^{k})nP_\epsilon(a,\alpha^{k})+\sum_{\bar{a}\in\bar{A}_{a,a'}\setminus \{\bar{a}^k\}} P_r(S_a=\{k\}\mid\bar{a})\prod_{j\in I}nP_\epsilon(a,\alpha^j),
\end{equation}
where the states $\alpha^j$ are computed from $a$ and corresponding $\bar{a}\in\bar{A}_{a,a'}\setminus \{\bar{a}^k\}$ as in \eqref{alpha_j}.
%Note that $P_r(S_a=K_{a,a'}\mid \{k\}) \neq 0$  due to \eqref{Prs}.
Note that $Pr(S_a=\{k\}\mid \bar{a}^k)$ is guaranteed to be bounded away from zero (independent from $\epsilon$). More specifically,
\begin{equation}
  \label{Prs}  
  Pr(S_a=\{k\}\mid \bar{a}^k)\geq (1-\kappa)\kappa^{n-1},
\end{equation}
where the lower bound is the probability that every agent except $k$ stays stationary due to inertia (line 5 in Alg. \ref{alg:sync}, $\kappa_i\leq \kappa$), which is one feasible way that always results in $S_a=\{k\}$ when $\bar{a}=\bar{a}^k$.
%Since agents in $K_{a,a'}$ are {\color{blue}uncoupled then they can change thei action as their intended action whenever.} This scenario is one possible way that $S_a=K_{a,a'}$ may be observed under Alg. \ref{alg:sync}. Accordingly,
% \begin{equation}
%     \label{Prs}
% K_{a,a'} \cap I^c_i(a) =\emptyset, \forall i \in S_a \Rightarrow
%   Pr(S_a=K_{a,a'}\mid \bar{a}) \geq (1-\kappa)^{|K_{a,a'}|}\kappa^{n-|K_{a,a'}|}.
% \end{equation}
Since $P_\epsilon (a,a')>0$ and $P_\epsilon (a,\alpha^k)=P_\epsilon (a,a')$, we can use \eqref{main_equation3} to conclude that $P'_\epsilon(a,a')>0$, i.e., all feasible transitions of the asynchronous learning are feasible in the synchronous version as well. 

Next we will show that the resistances are also preserved during the synchronization process proposed in Alg. \ref{alg:sync}. Note that any $a,a'$ such that $P_\epsilon (a,a')>0$ only differ in one entry $k$. Accordingly, as per \eqref{alpha_j}, $\alpha^k=a'$ for any $\bar{a}\in\bar{A}_{a,a'}\setminus \{\bar{a}^k\}$. Hence, we have $P_\epsilon(a,a')=P_\epsilon(a,\alpha^k)$ as a multiplayer in every summand of \eqref{main_equation3}. Accordingly, using Lemma \ref{resistance_for_probs_sum_multiplication} and the fact that $P_r(S_a=\{k\}\mid \bar{a}^k)$ is bounded away from zero independent of the noise parameter $\epsilon$ (hence it does not affect the resistance) as per \eqref{Prs}, we obtain
\begin{equation}
\label{P''_res2}
R'(a,a')=R(a,a'),
\end{equation}
where $R'(a,a')$ is the resistance of the transition $(a,a')$ on $P'_\epsilon$ and $R(a,a')$ is the resistance of the transition $(a,a')$ on $P_\epsilon$. Consequently, we conclude that any feasible transition $(a,a'\neq a)$ in $P_\epsilon$ is also feasible in $P_\epsilon'$ and has equal resistances in $P_\epsilon$ and $P'_\epsilon$.
\end{proof}

Next, we show that the unperturbed Markov chains $P_0$ and $P'_0$ have the same set of recurrent states.
% \begin{lem}
% \label{regular_perturbed_p'}
% Let $P_\epsilon$ be the unperturbed Markov chain induced by an asynchronous learning algorithm as in Alg. \ref{alg:async} and let $P_\epsilon'$ be the chain induced by its proposed synchronous version as in Alg. \ref{alg:sync}. Then, $P_\epsilon'$ is a regular perturbed Markov chain as well.
% \end{lem}
% \begin{proof}
% To begin with, we need to show that the conditions that have been discussed in the appendix in \cite{Young93} will be held for proposed synchronous version as well. Since the asynchronous learning algorithm induces a regular perturbed Markov chain, $P_\epsilon$, and as a result of Lemma \ref{first_condition}, any transition in synchronous modification satisfies the condition in \eqref{resdef_org}. Also, since $P_\epsilon$ is irreducible and any transition from $P_\epsilon$ is feasible in $P'_\epsilon$ (Lemma \ref{first_condition}), we can conclude that $P'_\epsilon$ is irreducible as well. Due to line 3 of the Algorithm \ref{alg:sync} inertia provides a self-loop for any state in $P_\epsilon'$. Hence, $P_\epsilon'$ is aperiodic, which means that the proposed synchronous algorithm induces a regular perturbed Markov chain.
% \end{proof}
\begin{lem}
\label{P_0,P'_0}
Let $P_0$ be an unperturbed Markov chain induced by an asynchronous learning algorithm as in Alg. \ref{alg:async} and let $P'_0$ be the unperturbed chain induced by its proposed synchronous version as in Alg. \ref{alg:sync}. Then, $P_0$ and $P'_0$ have the same set of recurrent states.
%Let $P_0$ be the unperturbed Markov chain induced by an asynchronous learning algorithm as in Alg. \ref{alg:async} and let $P_0'$ be the unperturbed chain induced by its proposed synchronous version as in Alg. \ref{alg:sync}. Then, $P_0$ and $P_0'$ have the same set of recurrent states.
\end{lem}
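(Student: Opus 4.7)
The plan is to show that $P_0$ and $P'_0$ induce the same reachability relation on $A$ and then invoke the standard fact that in a finite Markov chain, a state is recurrent if and only if the set of states reachable from it is a closed communicating class—a property that depends only on the (transitive closure of the) reachability relation. Equality of recurrent states then follows immediately.

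For the forward inclusion, any one-step transition with $P_0(a,a')>0$ has resistance zero, so Lemma \ref{first_condition} guarantees the same transition under $P'_\epsilon$ has resistance zero, i.e.\ $P'_0(a,a')>0$. Concatenating such single-step transitions shows that every $P_0$-reachable state is $P'_0$-reachable.

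The reverse inclusion requires decomposing any one-step $P'_0$ transition $(a,a')$ into a sequence of single-agent $P_0$ transitions. The key structural observation is that $P'_0(a,a')>0$ forces the agents in $K_{a,a'}$ to be pairwise non-adjacent under the coupling functions: for any $i\neq j$ in $K_{a,a'}=S_a$, the priority check in line 6 of Alg.\ \ref{alg:sync} together with property 2 of valid coupling functions would otherwise demand the contradictory strict inequalities $\beta_j<\beta_i$ and $\beta_i<\beta_j$. Hence $K_{a,a'}\setminus\{i\}\subseteq I\setminus I^c_i(a)$ for every $i\in K_{a,a'}$, and property 3 together with the subset-closure noted after Definition \ref{def:I_uii} yields that $i$ is uncoupled from $K_{a,a'}\setminus\{i\}$ at $a$. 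Fixing any ordering $i_1,\ldots,i_m$ of $K_{a,a'}$ and defining $a^0=a$ and $a^k=(a'_{i_k},a^{k-1}_{-i_k})$, I would apply the uncoupling identities of Definition \ref{def:I_uii} at $a$ with $J=\{i_1,\ldots,i_{k-1}\}$ and $a'_J$ taken from $a'$—noting that $a'_{i_j}\in C_{i_j}(a)$ because it was sampled from $\pi_0(C_{i_j}(a);U_{i_j})$—to conclude $C_{i_k}(a^{k-1})=C_{i_k}(a)$ and $U_{i_k}(\cdot,a^{k-1}_{-i_k})=U_{i_k}(\cdot,a_{-i_k})$ on $C_{i_k}(a)$. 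Therefore $a'_{i_k}$ remains a feasible best response at $a^{k-1}$, so $P_0(a^{k-1},a^k)>0$, and concatenating realizes $(a,a')$ as a $P_0$-path.

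The main obstacle is this bookkeeping in the decomposition, since the uncoupling identities must be invoked on intermediate profiles rather than on $a$ itself; the trick is to derive uncoupling from all of $K_{a,a'}\setminus\{i\}$ once at $a$ and then read off partial sub-profiles indexed by the already-moved agents, relying on the subset-closure remark so that a single application suffices throughout. Once both inclusions are established and reachability is shown to coincide, the characterization of recurrence in finite Markov chains completes the proof.
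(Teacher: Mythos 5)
Your proof is correct and follows essentially the same route as the paper: establish that $P_0$ and $P'_0$ have identical reachability (forward inclusion because single-agent updates survive synchronization; reverse inclusion because agents moving simultaneously under Alg.~\ref{alg:sync} must be pairwise uncoupled by the priority rule, so their joint move can be sequentialized into single-agent $P_0$ transitions), and then conclude equality of recurrent states from reachability. The only minor differences are that the paper gets the forward inclusion directly from the inertia event ($\kappa_i>\kappa$, $\kappa_j\le\kappa$ for $j\ne i$) rather than via Lemma~\ref{first_condition} and zero resistance, and that your sequentialization spells out the bookkeeping (feasibility of $a'_J$ in $C_J(a)$, invariance of $C_{i_k}$ and $U_{i_k}$ along the intermediate profiles) that the paper's proof states only briefly.
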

\begin{proof}
For any agent $i\in I$, under Alg. \ref{alg:sync} there is a non-zero probability that $\kappa_i > \kappa$ and $\kappa_j \leq \kappa$ for all $j\neq i$, in which case $i$ would make an asynchronous update just as in Alg. \ref{alg:async}. Hence, $P'_0$ has all the feasible transitions of $P_0$. Any additional feasible transition in $P'_0$ involves a synchronous update by agents who are uncoupled with each other (line 6 of Alg. \ref{alg:sync}), i.e., they do not influence each other's utility or feasible actions when deviating from $a$ as per \eqref{U_i_eq1} in uncoupled agents definition. Accordingly, for any $a,a'\in A$ such that $P'_0(a,a')>0$ and $P_0(a,a')=0$, there is a multi-step transition from $a$ to $a'$ on $P_0$, i.e., the respective agents switch their actions in $a$ to those in $a'$ sequentially. Hence, for any $a,a' \in A$, $a'$ is reachable from $a$ on $P'_0$ if and only if $a'$ is reachable from $a$ on $P_0$. Since the recurrent states are determined by the reachability among the states in $A$, $P_0$ and $P'_0$ have the same set of recurrent states.
\end{proof}

Finally, we will show that the regular perturbed Markov chains induced by Alg. \ref{alg:async} and Alg. \ref{alg:sync} have the same stochastically stable states.

\begin{theorem}
\label{thm-stoch}
Let $P_\epsilon$ be a regular perturbed Markov chain induced by an asynchronous learning algorithm as in Alg. \ref{alg:async} and let $P_\epsilon'$ be the regular perturbed Markov chain induced by its proposed synchronous version as in Alg. \ref{alg:sync}. Then, $P_\epsilon$ and $P_\epsilon'$ have the same stochastically stable states.
\end{theorem}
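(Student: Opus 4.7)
The plan is to invoke the standard resistance-tree characterization of stochastic stability: in a regular perturbed Markov chain the stochastically stable states are exactly the recurrent states of the unperturbed chain whose stochastic potential (the minimum total resistance of a spanning in-tree rooted at that state) is smallest. By Lemma~\ref{P_0,P'_0}, $P_0$ and $P_0'$ have the same recurrent states, so it suffices to show that every state has the same stochastic potential under $P_\epsilon$ and $P_\epsilon'$.

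One direction is essentially free: Lemma~\ref{first_condition} gives that every feasible asynchronous transition embeds in $P_\epsilon'$ with identical resistance, so any minimum-resistance $a$-tree for $P_\epsilon$ is also an $a$-tree for $P_\epsilon'$, yielding $\gamma'(a)\le\gamma(a)$. The substantive direction is $\gamma(a)\le\gamma'(a)$, and the main technical ingredient will be a \emph{decomposition lemma}: every synchronous edge $(b,b')$ in $P_\epsilon'$, with $K_{b,b'}=\{i_1,\dots,i_k\}$, can be realized as a sequential asynchronous path $b=b^{(0)}\to b^{(1)}\to\cdots\to b^{(k)}=b'$ in $P_\epsilon$ whose edge-resistances sum to exactly $R'(b,b')$.

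To establish the decomposition lemma, I would first use the priority mechanism in line~6 of Alg.~\ref{alg:sync} together with property~2 of the coupling functions to argue that no two members of $K_{b,b'}$ can lie in each other's coupling set (otherwise the strictly ordered priority variables would forbid both from moving simultaneously). Property~3, together with the monotonicity remark following Definition~\ref{def:I_uii}, then gives that each $i_r$ is uncoupled at $b$ from the whole set $K_{b,b'}\setminus\{i_r\}$. The uncoupling identities \eqref{U_i_eq1} then yield $C_{i_r}(b^{(r-1)})=C_{i_r}(b)$ and $U_{i_r}(\,\cdot\,,b^{(r-1)}_{-i_r})=U_{i_r}(\,\cdot\,,b_{-i_r})$ on $C_{i_r}(b)$, so the probability that $i_r$ selects $a'_{i_r}$ at $b^{(r-1)}$ under $\pi_\epsilon$ coincides with the corresponding probability at $b$. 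Combining the expansion \eqref{main_equationn1}, Lemma~\ref{resistance_for_probs_sum_multiplication}, and the $\epsilon$-independent lower bound \eqref{Prs} on $Pr(S_b=K_{b,b'}\mid\bar{a})$ then gives $R'(b,b')=\sum_{r=1}^{k}R(b^{(r-1)},b^{(r)})$, matching the resistance summed along the asynchronous path.

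Given the decomposition lemma, $\gamma(a)\le\gamma'(a)$ follows by taking a minimum $a$-tree $T'_a$ in $P_\epsilon'$, replacing each multi-agent edge by its sequential path in $P_\epsilon$, and extracting a spanning $a$-tree of no greater total resistance from the resulting directed subgraph. I expect this last extraction step to be the main obstacle: the decomposition produces a union of paths rather than a tree, intermediate states are injected into the graph, and care is needed to verify that selecting a single outgoing edge per non-root vertex along its path toward $a$ introduces no cycles and does not inflate the total resistance beyond $R'(T'_a)$. Once done, combining the two inequalities gives $\gamma(a)=\gamma'(a)$ for every recurrent state $a$, and hence $P_\epsilon$ and $P_\epsilon'$ have the same stochastically stable states.
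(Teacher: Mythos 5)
Your proposal is correct and follows essentially the same route as the paper: both directions rest on Lemmas \ref{first_condition} and \ref{P_0,P'_0}, plus the same key decomposition of each synchronous transition into a sequential asynchronous path of equal total resistance, justified exactly as you argue via the priority mechanism forcing simultaneously moving agents to be pairwise uncoupled so that \eqref{U_i_eq1} preserves each updating agent's feasible set and utilities along the path (cf. \eqref{prob-policy-equality1}--\eqref{Rh}). The only difference is bookkeeping: the paper carries out the second direction by comparing minimum-resistance paths between arbitrary pairs of states rather than by replacing edges inside a minimum tree and re-extracting a spanning tree, which sidesteps (rather than resolves more carefully than you would need to) the extraction step you flag as the main obstacle.
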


%We will show that the impact of the proposed simultaneous update strategy on the feasible transitions of a synchronous learning process satisfies these two conditions : 1) new transitions are added such that the minimum resistance between any pair of nodes remains the same, 2) all the existing transitions other than self-loops are feasible and their resistances are preserved. To begin with, we show the first condition holds.

%We need to show new transitions are added such that the minimum resistance between any pair of states remains the same.
% Let $K(a,a')\subseteq I$ be defined as
% \begin{equation}
% \label{K(a,a')}
% J(a,a')=\{i\in I \mid a_i\neq a'_i\},
% \end{equation}
\begin{proof} 
The stochastic potential on any state $\alpha$ is determined by the the total resistance of the minimum resistance trees rooted at $\alpha$, say $R(T_\alpha)$ on $P_\epsilon$ and $R'(T_\alpha')$ on $P_\epsilon'$. In light of Lemma \ref{first_condition}, any tree on $P_\epsilon$ is also feasible and has the same total resistance on $P_\epsilon'$, which implies $R'(T_\alpha')\leq R(T_\alpha)$. What we will show here is that $ R(T_\alpha)\leq R'(T_\alpha')$ is also true, which implies that $ R(T_\alpha)= R'(T_\alpha')$ and each state has equal stochastic potential in $P_\epsilon$ and $P'_\epsilon$. 

%To this end, we will show that the minimum resistance paths between any two states have equal total resistance on $P_\epsilon$ and  $P'_\epsilon$.

Let $h'=\{\bar{a}^0,\bar{a}^1,\hdots,\bar{a}^M\}$ be a minimum resistance path on the synchronous version between arbitrary states $\bar{a}^0$ and $\bar{a}^M$. We will show that for any such $h'$, there is a path, $h$, in the asynchronous version from $\bar{a}^0$ to $\bar{a}^M$ such that $R(h)=R'(h')$, i.e., the paths have equal total resistance. Let $P'_\epsilon(h')$ be the probability that the synchronous algorithm takes the system from $\bar{a}^0$ to $\bar{a}^M$ via $h'$, i.e., 
\begin{equation}
\label{p_path}
    P'_\epsilon(h')=\prod_{j=0}^{M-1} P'_\epsilon(\bar{a}^j,\bar{a}^{j+1}).
\end{equation} %and \eqref{main_equation1}
For any pair $(\bar{a}^j,\bar{a}^{j+1})$ on $h'$, let {\small{ $K_{\bar{a}^j,\bar{a}^{j+1}}=\{k_1, k_2, \hdots, k_m\}$}} be the set of agents whose actions are different in $\bar{a}^j$ and $\bar{a}^{j+1}$ as given in \eqref{K}. 

Furthermore, for every $k_i\in  K_{\bar{a}^j,\bar{a}^{j+1}}$, let  $\alpha^{k_i}$ be the state as given in \eqref{alpha_j} where we set $a=\bar{a}^j$ and $a'=\bar{a}^{j+1}$.  Let $R'(\bar{a}^j,\bar{a}^{j+1})$ be the resistance of $P_\epsilon'(\bar{a}^j,\bar{a}^{j+1})$. Using \eqref{main_equation1}, \eqref{p_path}, and Lemma \ref{resistance_for_probs_sum_multiplication}, it can be shown that
% \begin{equation}
% \label{fixed_j_R'}
% R'(\bar{a}_j,\bar{a}_{j+1})=\sum_{k\in K(\bar{a}_j,\bar{a}_{j+1})} R(\bar{a}_j,\alpha^{k}).
% \end{equation}
\begin{equation}
\label{fixed_j_R'}
R'(h')= \sum_{j=0}^{M-1} R'(\bar{a}^j,\bar{a}^{j+1}) =\sum_{j=0}^{M-1} \sum_{i=1}^m R(\bar{a}^j,\alpha^{k_i}),
\end{equation}
where $R(\bar{a}^j,\alpha^{k})$ is the resistance of the transition from $\bar{a}^j$ to $\alpha^{k_i}$ under the asynchronous learning algorithm. 
Moreover, for any $(\bar{a}^j,\bar{a}^{j+1})$ in $h'$, we can construct a path on the asynchronous version $h=\{h_0, h_1, \hdots, h_{M-1}\}$, where  $h_j=\{b^1,b^2,\hdots,b^{m+1}\}$ such that $b^1=\bar{a}^j$,$b^{m+1}=\bar{a}^{j+1}$, and each $b^{i+1}$ is the state obtained from $b^{i}$ when agent $k_i\in K_{\bar{a}^j,\bar{a}^{j+1}}$ unilaterally changes its action from $b^{i}_{k_i}$ to $\bar{a}^{j+1}_{k_i}$. Accordingly, for any $i \in \{2,\hdots,m+1\}$,
\begin{equation}
\label{b_ieq}
b^i_k = \left\{\begin{array}{ll} \mbox{$\bar{a}^{j+1}_k$ , if $k \in \{k_1, \hdots, k_{i-1}\}$} \\ \mbox{$\bar{a}^j_k$, otherwise.}\end{array}\right.
\end{equation}

Note that for any feasible transition $(\bar{a}^j,\bar{a}^{j+1})$ of the synchronous version, due to line 6 in Alg. \ref{alg:sync}, the agents in $K_{\bar{a}^j,\bar{a}^{j+1}}$ are all uncoupled from each other when the system is at $\bar{a}^j$. Accordingly, for each $k_i \in K_{\bar{a}^j,\bar{a}^{j+1}}$, the transitions made by the other agents along $h_j$ have no impact on $k_i$'s set of feasible actions or utilities from those actions. Accordingly, all the feasible actions and the corresponding utilities are the same for agent $k_i$ when it is allowed to update its action at $b^i$ or $\bar{a}^{j}$, i.e.,  
\begin{equation}
\label{prob-policy-equality1}
C_{k_i}(b^{i})=C_{k_i}(\bar{a}^j),
\\
U_{k_i}(a_i', b^{i}_{-k_i})=U_{k_i}(a_i',\bar{a}^{j}_{-k_i}), \forall a_i' \in C_{k_i}(\bar{a}^j).
\end{equation}
Hence, the probability of the transition $(b^{i},b^{i+1})$ is equal to the probability of transition $(\bar{a}^j,\alpha^{k_i})$ due to \eqref{prob-policy-equality1}, i.e.,
\begin{equation}
\label{p_equality}
P_\epsilon(b^{i},b^{i+1}) = P_\epsilon(\bar{a}^j,\alpha^{k_i}),
\end{equation}
where $\alpha^{k_i}$ be the state as defined in \eqref{alpha_j} for $a=\bar{a}^j$ and $a'=\bar{a}^{j+1}$. Furthermore,  the probability of traversing the path $h_j$ under the asynchronous algorithm is
\begin{equation}
\label{Ph_j}
P_\epsilon(h_j)=\prod_{i=1}^{m} P_\epsilon(b^i,b^{i+1}).
\end{equation}
Using \eqref{p_equality}, \eqref{Ph_j}, and Lemma \ref{resistance_for_probs_sum_multiplication}, we obtain 
\begin{equation}
\label{Rh_j}
R(h_j)=\sum_{i=1}^{m} R(\bar{a}^j,\alpha^{k_i}),
\end{equation}
\begin{equation}
\label{Rh}
R(h)=\sum_{j=0}^{M-1} R(h_j) = \sum_{j=0}^{M-1}\sum_{i=1}^{m} R(\bar{a}^j,\alpha^{k_i}).
\end{equation}
Based on \eqref{fixed_j_R'} and \eqref{Rh}, we obtain  $R(h)=R'(h')$. Accordingly, the total resistance of a minimum resistance path
 from $\bar{a}^0$ to $\bar{a}^M$ in $P_\epsilon$ is at most $R'(h')$. Since this is true for the minimum resistance paths between any pair of agents, for any $\alpha \in A$ the minimum resistance trees rooted at $\alpha$ satisfy $ R(T_\alpha)\leq R'(T_\alpha')$. Since $ R'(T_\alpha')\leq R(T_\alpha)$ is also true, we conclude that any state $\alpha \in A$ has the same stochastic potential in $P_\epsilon$ and $P'_\epsilon$. 
 Moreover, due to Lemma \ref{P_0,P'_0}, $P'_0$ and $P_0$ have the same recurrent states. Since the stochastically stable states of a regular perturbed Markov chain are the recurrent states of the unperturbed chain with the minimum stochastic potential \cite{Young93}, we conclude that $P_\epsilon$ and $P'_\epsilon$ have the same stochastically stable states. 

\end{proof}

\section{Example: Distributed Graph Coverage}
\label{example}
We numerically demonstrate the performance of proposed synchronization method in a coverage control problem over a graph \cite{Yazicioglu17TCNS}. 
In this distributed graph coverage problem, $n$ mobile agents start with an arbitrary deployment on a connected, undirected graph $G = (V,E)$.  Let $a_i(t) \in V$ denote the position of agent $i$ on $G$ at time $t$. Suppose that each agent can cover the nodes within a distance of one from its current position. Accordingly, the set of nodes covered by agent $i$ at time $t$ is

\begin{equation}
\label{Cov_i}
Cov_i(t)=\{v\in V \mid d(v,a_i(t)) \leq 1\},
\end{equation}
where $d(v,a_i(t))$ denotes the distance between node $v$ and agent $i$'s position $a_i(t)$. Then, the set of covered nodes at time $t$ is the union of the sets of nodes covered by the agents:
\begin{equation}
\label{Coverage_set_def}
Cov(t)=\bigcup_{i=1}^{n} Cov_i(t).
\end{equation}
The objective of the distributed graph coverage problem is to drive the agents to an optimal configuration that maximizes
\begin{equation}
\label{Coverage_func}
\phi(a(t))=\sum_{v \in Cov(t)} \omega (v),
\end{equation}
where $\omega (v) \geq 0$ denotes the value of node $v$. In this setting, multiple agents may be present at the same node, and each agent can either maintain its position or move to an adjacent node in the next time step. Accordingly, for each agent $i$, the next action belongs to the constrained set
\begin{equation}
\label{Cov_cons}
C_i(a(t))=\{v\in V \mid d(v,a_i(t)) \leq 1\}.
\end{equation}
As shown in \cite{Yazicioglu17TCNS}, such a distributed graph coverage problem can be formulated as a constrained potential game by defining each agent's utility as the total value of nodes covered only by itself, i.e.,
\begin{equation}
\label{U_i}
U_i(a(t)) = \sum_{v \in Cov_i(t)\setminus Cov(t)} \omega (v).
\end{equation}
For the resulting game, an asynchronous learning algorithm such as the binary log-linear learning (BLLL) \cite{Marden12} can be used to make the global maximizers of \eqref{Coverage_func} stochastically stable \cite{Yazicioglu17TCNS}. In this section, we provide simulation results to compare  BLLL and its synchronous version in Alg. \ref{alg:syn_BLLL}, which is obtained by using our proposed approach in Alg. \ref{alg:sync}.
%to BLLL results in the synchronous algorithm in. 

\begin{algorithm}{\small
\caption{Synchronous version of BLLL algorithm}\label{alg:syn_BLLL}
\hspace*{0.5mm} \textbf{Input}: $\Gamma=(I,A,U,C)$, $\epsilon>0$ (small), inertia $\kappa \in (0,1)$, \\
\hspace*{0.5mm} \textbf{Initialization}: $a\in A$
\begin{algorithmic}[1]
\While{(1)}
\State Each agent $i$ executes the following:
    \State Generate a random $\kappa_i
\sim \mathcal{U}(0,1)$.
\State  Randomly pick an alternative action $a_i'\in C_i(a)$.
\State  Choose an intended action $\bar{a}_i$:
% \State Chooses an intended action $a_i'$:
% $$a_i' \text{ is randomly picked from } C_i(a)$$
% \State $\alpha=e^{U_{i}(a_i,a_{-i}) / \epsilon}$ and $\eta=e^{U_{i}\left(a_i', a_{-i}\right) / \epsilon}$.
% \State $\eta=e^{U_{i}\left(a_i', a_{-i}\right) / \epsilon}$.
$$\bar{a}_i=\left\{\begin{array}{ll} \mbox{$a_i$, w.p.
%$\kappa_i+(1-\kappa_i)\frac{\eta}{\eta+\alpha}$,}
{\color{black}$\frac{e^{U_{i}\left(a_i, a_{-i}\right) / \epsilon}}{e^{U_{i}\left(a_i, a_{-i}\right) / \epsilon}+e^{U_{i}(a_i',a_{-i}) / \epsilon}}$,}} \\ 
\mbox{$a_i'$, otherwise.}\end{array}\right.$$
% \State $\eta=e^{U_{i}\left(\bar{a}_{i}, a_{-i}(t)\right) / T}$.
% $$\bar{a}_i=\left\{\begin{array}{ll} \mbox{$a_i(t)$, w.p. $\kappa_i+(1-\kappa_i)\frac{\eta}{\eta+\alpha}$,} \\ 
% \mbox{random $\bar{a}_{i} \in C_i(a_i(t))$, otherwise.}\end{array}\right.$$
\State Generate the priority variable $\beta_i$:
$$\beta_i=\left\{\begin{array}{ll} \mbox{$0$, if $\bar{a}_i =a_i$ or  $\kappa_i \leq \kappa$,} \\ 
\mbox{$\sim \mathcal{U}(0,1)$, otherwise.}\end{array}\right.$$
\State Update the action: 
$$a_i^+=\left\{\begin{array}{ll} \mbox{$a_i$, if $\beta_i=0$ or $\exists j \neq i \in I_i^{c}(a): \beta_j \geq \beta_i$,} \\ \mbox{$\bar{a}_i$, otherwise.}\end{array}\right.$$
% \left\{\begin{array}{ll} \mbox{\small{$\bar{a}_i$, if $I^c_i(a(t))=\emptyset$ or $\beta_i>\beta_j, \; \forall j \in I_i^{c}(a(t))$,}} \\ \mbox{\small{$a_i(t)$, otherwise.}}\end{array}\right.$$
\State $a_i=a^+_i$.
\EndWhile
\end{algorithmic}}
\end{algorithm}

In light of \eqref{Cov_cons}, the agents do not affect each other's constrained action sets in the graph coverage game. Furthermore, an agent's utility in the next time step cannot be affected by the actions of others that are at least 5 hops away on the graph. This is because when two agents have a distance of five or more, their sets of covered nodes in \eqref{Cov_i} cannot intersect in the next time step no matter how they move. Accordingly, we pick the coupling functions $\{I_1^c, \hdots, I_n^c\}$ as
\begin{equation}
\label{I^c_4hub}
I_i^{c}(a(t))=\{j \in I| \; d(a_i(t),a_j(t))\leq 4\}.
\end{equation}

% {\color{blue} ERROR IN UNCOUPLED AGENTS! THERE ARE EXAMPLES WHERE UNCOUPLED AGENTS ARE WITHIN 4 HOPS.
% Moreover, the set of agent who can pick actions in locally with intersecting coverage areas is the set of agent $i$'s coupled agents. According to \cite{Yazicioglu17TCNS} cover range of $\delta \leq 1$ for all agents would determine
% \begin{equation}
% \label{I^c_4hub}
% I_i^{c}(t)=\{j| \; d(a_i(t),a_j(t))\leq 4\},
% \end{equation}
% which means that agents with 4 hubs between from $i$ are coupled with $i$ and their actions would affect $i$'s utility.}
\subsection{Numerical Results}
We demonstrate the performance of BLLL and its proposed synchronous version in the graph coverage game on a grid environment as shown in Fig. \ref{fig:positions}. This environment has some static obstacles and 80 feasible nodes. Each node has at most 4 neighbors (up, down, left, right) and has a value assigned from $\{1,3,5,7\}$. The value of each node is illustrated via its size in Fig. \ref{fig:positions}, where bigger nodes have higher value. We consider 5 mobile agents, for which a globally optimal configuration is also shown in Fig. \ref{fig:positions} by coloring the corresponding locations of agents in blue. Such an allocation makes the total value of covered nodes equal to 106 in this example.

\begin{figure}[htb]
    \centering
    \includegraphics[width=0.2\textwidth]{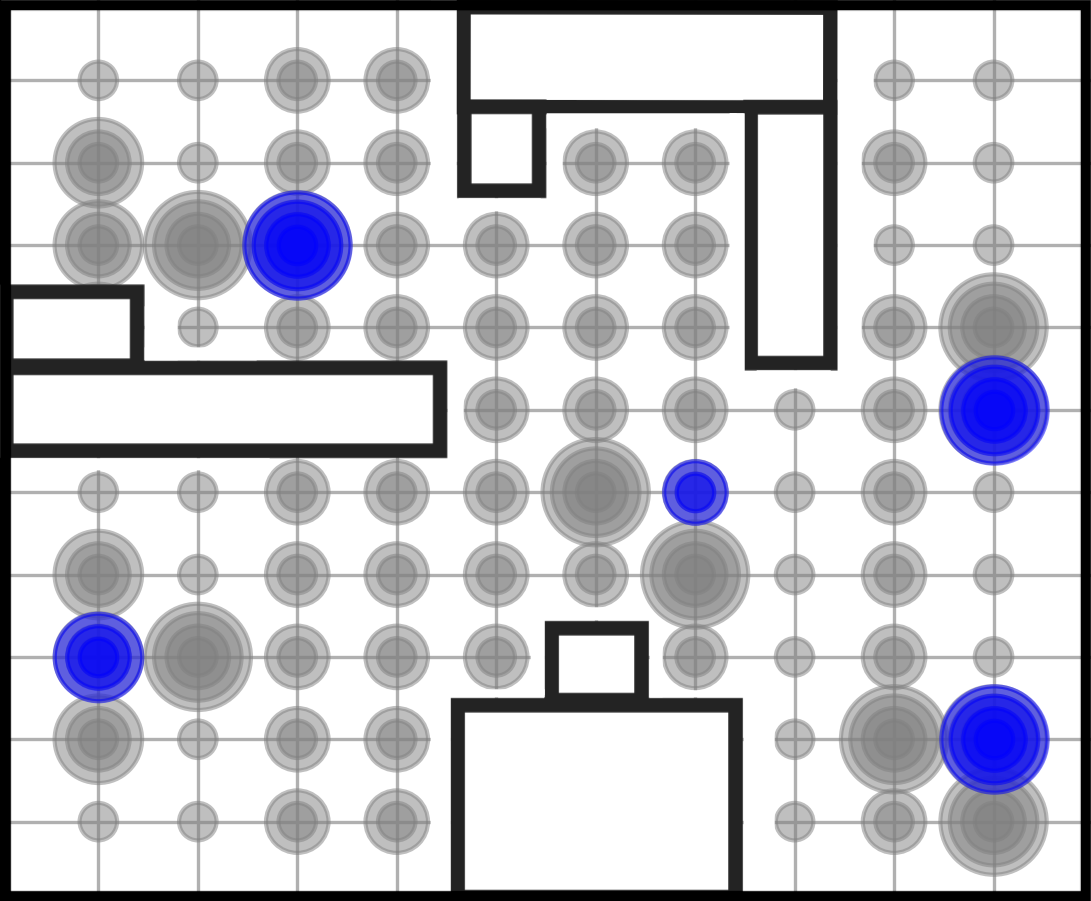}
    \caption{ A grid environment with obstacles (white boxes) and 80 feasible nodes. Size of each node denotes its value. A globally optimal allocation of 5 agents is highlighted in blue.}
    \label{fig:positions}
\end{figure}

 In the simulations, we use a noise parameter of $\epsilon=0.4$ for both BLLL and its proposed synchronous version. In the synchronous version, we use $\kappa = 0.01$ as the inertia parameter. To obtain a statistically significant comparison of the two algorithms, we randomly pick 50 initial configurations, $a(0) \in A$, for the agents. For each initial configuration, we run both learning algorithms separately over a horizon of 4000 rounds. Accordingly, we obtain 50 random runs for each algorithm. In Fig. \ref{fig:plots}, we report the time evolution of the average, minimum, and maximum values of the global objective in \eqref{Coverage_func} under each algorithm based on these runs. In Fig. \ref{fig:plots2}, we show the evolution of the global objective under the two algorithms in one of these 50 cases as an example.

\begin{figure}[htb!]
%\scriptsize{
\includegraphics[trim =20mm 0mm 0mm 0mm, clip, width=0.52\textwidth]{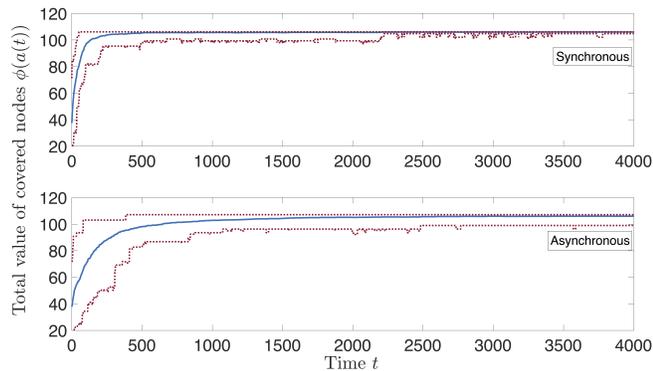}
\caption{{Evolution of the average (solid) and the minimum/maximum (dashed) values of the total value of covered nodes based on the 50 randomly initialized runs of BLLL (bottom) and its proposed synchronous version (top).  %{\color{blue}Fix the caption, remove the legends, Y axis label "Total value of covered nodes $\phi(a(t))$", X axis label "Time $t$", in both plots use a blue solid line for the average, top figure seems to be cropped and missing the values (120, 100 appear as 20, 00) for y axis. }
}}
\label{fig:plots}
\end{figure}

\begin{figure}[htb!]
%\scriptsize{
\includegraphics[trim =20mm 0mm 0mm 0mm, clip, width=0.52\textwidth]{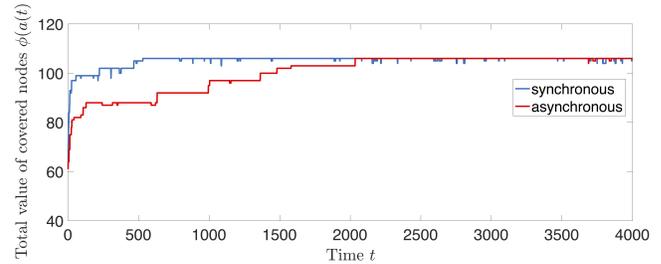}
\caption{{Evolution of the total value of covered nodes under BLLL and its proposed synchronous version in an example.  %{\color{blue}Fix the caption, remove the legends, Y axis label "Total value of covered nodes $\phi(a(t))$", X axis label "Time $t$", in both plots use a blue solid line for the average, top figure seems to be cropped and missing the values (120, 100 appear as 20, 00) for y axis. }
}}
\label{fig:plots2}
\end{figure}

Plots in Fig. \ref{fig:plots} show that the proposed synchronization leads to an approximately 4-5 times faster learning than the BLLL in this scenario. Based on the average total values (blue lines) in Fig. \ref{fig:plots}, we observe that the time steps it takes  to reach a total value of 90, 95, and 100 are as follows (synchronous vs. asynchronous): 90 (73 vs. 280), 95 (91 vs. 372), 100 (134 vs. 644). Such an increase in the convergence speed can be explained by the fact that once these 5 agents sufficiently spread out, most of them can move simultaneously under the proposed synchronous algorithm. 
\section{Conclusion}
\label{conc}
We presented a synchronization method for improving the convergence speed of asyncrhonous game-theoretic learning algorithms while maintaining their limiting behavior. We particularly focused on the stochastically stable states, which are widely used for characterizing the limiting behavior of various learning algorithms such as log-linear learning, Metropolis learning, and their variants. In this context, a decentralized random prioritization based method was proposed to enable simultaneous updates by uncoupled agents, who do not affect each other's utility or feasible actions at the current configuration, in each round of the learning process. We theoretically showed the invariance of stochastically stable states under the proposed approach and numerically demonstrated the resulting improvement in convergence speed by considering a coverage control problem in an environment represented as a graph.

\bibliographystyle{IEEEtran}
\bibliography{refer}
\end{document}